\journalname{}
\newtheorem{proposition}{Proposition}
\newtheoremstyle{prestyle}
{0} 
{\topsep} 
{\itshape} 
{} 
{\bfseries} 
{.} 
{.5em} 
{} 
\theoremstyle{prestyle}
\theoremstyle{definition}	
\begin{document}

\begin{frontmatter}



\dochead{}

\title{Efficient Calculations for Inverse of $k$-diagonal Circulant Matrices and Cyclic Banded Matrices}


\author[]{Chen Wang}
\ead{2120220677@mail.nankai.edu.cn}
\author[]{Hailong Yu}
\ead{hlyu@mail.nankai.edu.cn}
\author[]{Chao Wang\corref{cor1}}
\ead{wangchao@nankai.edu.cn}

\address[mymainaddress]{Address: College of Software, Nankai University, Tianjin 300371, China}

\cortext[cor1]{Corresponding author.}

\begin{abstract}
	$k$-diagonal circulant matrices and cyclic banded matrices are widely used in numerical simulations and signal processing of circular linear systems. Algorithms that directly involve or specify linear or quadratic complexity for the inverses of these two types of matrices are rare. We find that the inverse of a $k$-diagonal circulant matrix can be uniquely determined by a recursive formula, which can be derived within $O(k^3 \log n+k^4)$. Similarly for the inverse of a cyclic banded matrix, its inverse can be uniquely determined by a series of recursive formulas, with the initial terms of these recursions computable within $O(k^3 n+k^5)$. The additional costs for solving the complete inverses of these two types of matrices are $kn$ and $kn^2$. Our calculations enable rapid representation with most processes defined by explicit formulas. Additionally, most algorithms for inverting $k$-diagonal circulant matrices rely on the Fast Fourier Transform, which is not applicable to finite fields, while our algorithms can be applied to computations in finite fields.
	
	
\end{abstract}

\begin{keyword}
	
	
	$k$-diagonal circulant matrix
	\sep cyclic banded matrix
	\sep inverse
	\sep recursive formula

\end{keyword}

\end{frontmatter}


\section{Introduction and preliminaries \label{s1}}

$k$-diagonal circulant matrices and cyclic banded matrices are two special types of matrices. Due to their characteristic of periodic boundaries, they are widely used in various linear systems such as digital signal processing, image compression, number theory, automata theory, coding theory, cryptography, and more \cite{ahmed2012orthogonal,duhamel1987improved,lu1989algorithms,del2015note,vaidyanathan1999cyclic}.

A $k$-diagonal circulant matrix ($k$-CM) is a type of sparse circulant matrix with only $k$ non-zero diagonals, where each row of the matrix is a cyclic shift of one position relative to the previous row. A $k$-diagonal cyclic banded matrix ($k$-CBM) has its non-zero elements primarily concentrated around the main diagonal and a few diagonals close to it, forming a banded structure. Unlike the conventional banded matrix, the banded structure of $k$-CBM is cyclic at the ends of the matrix, meaning the top and bottom corners are also considered adjacent, forming a ring-like band. The structure of these two matrices is as shown in Eq. \ref{M}. If for each $i,j,p$, $x_{i}^{(j)}=x_{i}^{(p)}$, then the matrix $M$ is a $k$-CM; conversely, if exists $x_{i}^{(j)}\neq x_{i}^{(p)}$, then the matrix is a $k$-CBM. Besides, for each $i$, we need to ensure that $x_{k}^{(i)} \neq 0$

\begin{equation}
\label{M}
\setlength{\arraycolsep}{3.5pt}
M = 
\begin{bmatrix}
	x_{3}^{(1)} & x_{4}^{(1)} & x_{5}^{(1)} & 0 & \cdots & 0 & 0 & x_{1}^{(1)} & x_{2}^{(1)}\\
	x_{2}^{(2)} & x_{3}^{(2)} & x_{4}^{(2)} & x_{5}^{(2)} & 0 & \cdots & 0 & 0 & x_{1}^{(2)}\\
	x_{1}^{(3)} & x_{2}^{(3)} & x_{3}^{(3)} & x_{4}^{(3)} & x_{5}^{(3)} & 0 & \cdots & 0 & 0\\
	0 & x_{1}^{(4)} & x_{2}^{(4)} & x_{3}^{(4)} & x_{4}^{(4)} & x_{5}^{(4)} & 0 & \cdots & 0\\
	\vdots & \ddots & \ddots & \ddots & \ddots & \ddots & \ddots & \ddots & \vdots\\
	0 & \vdots & 0 & x_{1}^{(n-3)} & x_{2}^{(n-3)} & x_{3}^{(n-3)} & x_{4}^{(n-3)} & x_{5}^{(n-3)} & 0\\
	0 & 0 & \vdots & 0 & x_{1}^{(n-2)} & x_{2}^{(n-2)} & x_{3}^{(n-2)} & x_{4}^{(n-2)} & x_{5}^{(n-2)}\\
	x_{5}^{(n-1)} & 0 & 0 & \vdots & 0 & x_{1}^{(n-1)} & x_{2}^{(n-1)} & x_{3}^{(n-1)} & x_{4}^{(n-1)}\\
	x_{4}^{(n)} & x_{5}^{(n)} & 0 & 0 & \cdots & 0 & x_{1}^{(n)} & x_{2}^{(n)} & x_{3}^{(n)}
\end{bmatrix}_{n*n}.
\end{equation}

Existing research primarily focuses on the determinant of $k$-CM. Now, the complexity of calculating the determinant of $k$-CM has been reduced to linear \cite{jia2017efficient,jia2016numerical} or even logarithmic $O(k\log k \log n+k^4)$ \cite{CHEN2014201}. 

For the inverse of $k$-CM, traditional algorithms capable of solving $k$-CBM include the Fast Fourier Transform (FFT), with a complexity of $O(n \log n)$  \cite{davis1979circulant,gray2006toeplitz}. However, FFT has several disadvantages: firstly, it cannot utilize the sparse characteristic of $k$-diagonal; secondly, its complexity is currently difficult to satisfy; thirdly, it fails when the problem comes to a finite field; and finally, it cannot be used for calculating $k$-CBM. S. Serra-Capizzano et al. introduced an efficient multigrid method that employs iterators to compute the inverse of banded circulant matrices, achieving linear overall complexity \cite{serra2004multigrid}. 
For the inverse of $k$-CBM, the primary methods currently used are Gaussian elimination, both of which have a complexity of $O(kn^2)$. P. R\'{o}zsa et al. have thoroughly analyzed generalized band matrices and periodic block-tridiagonal matrices. Their algorithm for solving the inverse of $k$-CBM has a complexity of $O(k^2n^2)$ \cite{favati1991generalized,rozsa1992periodic}. 

For $k$-CM and $k$-CBM, current algorithms applicable to Toeplitz matrices do not extend well. For example, the Gohberg-Semencul (GS) formula, which computes the inverse of Toeplitz matrices, $O(n\log ^2n)$ \cite{bini2012polynomial,heinig1984algebraic}. More efficient algorithms exist for $k$-CM due to its sparse and circulant properties, and for $k$-CBM, which lacks a Toeplitz structure, these methods are not applicable. This leads to a scarcity of algorithms for computing their inverses.

The superfast method allows for the representation of the inverse without explicitly computing all its entries, reducing complexity to sub-linear levels, such as the GS formula mentioned earlier for solving the inverse of Toeplitz matrices since the inverse can be identified from a small set of parameters. For $k$-CM and $k$-CBM, rapid representations of the inverses also exist, and we have summarized the calculation process into formulas. We find that the inverse of a $k$-CM can be uniquely represented by a specific recursive formula, which can be derived within $O(k^3 \log n+k^4)$. This makes the computation of its inverse extremely fast. If it is necessary to explicitly compute all $n$ entries of the inverse, the additional cost is $kn$. Similarly for the inverse of $k$-CBM, its inverse can be uniquely determined by a series of recursive formulas. The initial terms of these recursions can be computed within $O(k^3 n+k^5)$, and the additional cost is $kn^2$. Tables \ref{table1} and \ref{table2} provide a comparison of the computational efficiencies for these two types of matrices. It can be observed that our calculations not only allow for rapid representation but also significantly reduce the coefficient of $n$. This makes our computations highly efficient when calculating the inverses of large-scale $k$-CM and $k$-CBM.

\begin{table}[h]
\center
\caption{Comparison of inverse of $k$-CM}
\label{table1}
\begin{tabular}{lccc}
	\hline
	& rapid representation    & overall complexity & applicability over finite field \\
	\hline
	our computation    & $O(k^3 \log n+k^4)$ & $O(k^3 \log n+k^4)+kn$  & yes  \\
	Fast Fourier Transform   & - & $O(n\log n)$ & no \\
	multigride method \cite{serra2004multigrid}  & - & $>2kn$  & no  \\
	\hline
\end{tabular}
\end{table}

\begin{table}[h]
\center
\caption{Comparison of inverse of $k$-CBM}
\label{table2}
\begin{tabular}{lccc}
	\hline
	& rapid representation    & overall complexity\\
	\hline
	our computation    & $O(k^3 n+k^5)$ & $O(k^3 n+k^5)+kn^2$ \\
	Gaussian elimination   & - & $O(kn^2)$\\
	\hline
\end{tabular}
\end{table}

This paper consists of five sections. The second section introduces the method for calculating the determinant of $k$-CM, which is essential for inversion. The third section presents the fast algorithms for inverting $k$-CM while the fourth section is for $k$-CBM. The final section summarizes the work of the entire paper.

\section{Determinants of $k$-diagonal circulant matrices \label{s2}}

In this section, we take $k=5$ and $n > 8$ as an example and provide a general formula applicable to all $k$ at the end. This is the necessary preparation for calculating the inverse of $k$-CM.

Firstly, we can transform a $k$-CM $M$ into $M'$ as in Eq.\ref{M'} by swapping its columns. We have $\begin{vmatrix}M\end{vmatrix} = (-1)^{2n-4} \begin{vmatrix}M'\end{vmatrix} = \begin{vmatrix}M'\end{vmatrix}$. In subsequent calculations, we will no longer categorize based on the swapped versions of the matrix.
\begin{equation}
\setcounter{MaxMatrixCols}{20}
\label{M'}
M' = 
\begin{bmatrix}
	x_{5} & 0 & 0 & \cdots & 0 & x_{1} & x_{2} & x_{3} & x_{4}\\
	x_{4} & x_{5} & 0 & 0 & \cdots & 0 & x_{1} & x_{2} & x_{3}\\
	x_{3} & x_{4} & x_{5} & 0 & 0 & \cdots & 0 & x_{1} & x_{2}\\
	x_{2} & x_{3} & x_{4} & x_{5} & 0 & 0 & \cdots & 0 & x_{1}\\
	x_{1} & x_{2} & x_{3} & x_{4} & x_{5} & 0 & 0 & \ddots & 0\\
	\vdots & \ddots & \ddots & \ddots & \ddots & \ddots & \ddots & \ddots & \vdots\\
	0 & \cdots & 0 & x_{1} & x_{2} & x_{3} & x_{4} & x_{5} & 0\\
	0 & 0 & \cdots & 0 & x_{1} & x_{2} & x_{3} & x_{4} & x_{5}
\end{bmatrix}_{n*n}.
\end{equation}
Next, we use the first column of the matrix $M'$ to set other elements of the first row of $M'$ to 0, as shown locally in Eq.\ref{xy}.
\begin{equation}
\label{xy}
\renewcommand\arraystretch{1.3}
\begin{bmatrix}
	x_{5} & \cdots & y_{11} & y_{12} & y_{13} & y_{14}\\
	x_{4} & \cdots & y_{21} & y_{22} & y_{23} & y_{24}\\
	x_{3} & \cdots & y_{31} & y_{32} & y_{33} & y_{34}\\
	x_{2} & \cdots & y_{41} & y_{42} & y_{43} & y_{44}\\
	x_{1} & \cdots & 0 & 0 & 0 & 0\\
	\vdots & \ddots  & \vdots & \vdots & \vdots & \vdots
\end{bmatrix}
\Longrightarrow 
\begin{bmatrix}
	x_{5} & \cdots & 0 & 0 & 0 & 0\\
	x_{4} & \cdots & y_{21}-\frac{y_{11}x_{4}}{x_{5}} & y_{22}-\frac{y_{12}x_{4}}{x_{5}} & y_{23}-\frac{y_{13}x_{4}}{x_{5}} & y_{24}-\frac{y_{14}x_{4}}{x_{5}}\\
	x_{3} & \cdots & y_{31}-\frac{y_{11}x_{3}}{x_{5}} & y_{32}-\frac{y_{12}x_{3}}{x_{5}} & y_{33}-\frac{y_{13}x_{3}}{x_{5}} & y_{34}-\frac{y_{14}x_{3}}{x_{5}}\\
	x_{2} & \cdots & y_{41}-\frac{y_{11}x_{2}}{x_{5}} & y_{42}-\frac{y_{12}x_{2}}{x_{5}} & y_{43}-\frac{y_{13}x_{2}}{x_{5}} & y_{44}-\frac{y_{14}x_{2}}{x_{5}}\\
	x_{1} & \cdots & -\frac{y_{11}x_{1}}{x_{5}} & -\frac{y_{12}x_{1}}{x_{5}} & -\frac{y_{13}x_{1}}{x_{5}} & -\frac{y_{14}x_{1}}{x_{5}}\\
	\vdots & \ddots  & \vdots & \vdots & \vdots & \vdots
\end{bmatrix}.
\end{equation}
Meanwhile, another equation is very important, as shown in Eq.\ref{xty}.
\begin{equation}
\label{xty}
\renewcommand\arraystretch{1.3}
\begin{bmatrix}
	y_{21}-\frac{y_{11}x_{4}}{x_{5}} & y_{22}-\frac{y_{12}x_{4}}{x_{5}} & y_{23}-\frac{y_{13}x_{4}}{x_{5}} & y_{24}-\frac{y_{14}x_{4}}{x_{5}}\\
	y_{31}-\frac{y_{11}x_{3}}{x_{5}} & y_{32}-\frac{y_{12}x_{3}}{x_{5}} & y_{33}-\frac{y_{13}x_{3}}{x_{5}} & y_{34}-\frac{y_{14}x_{3}}{x_{5}}\\
	y_{41}-\frac{y_{11}x_{2}}{x_{5}} & y_{42}-\frac{y_{12}x_{2}}{x_{5}} & y_{43}-\frac{y_{13}x_{2}}{x_{5}} & y_{44}-\frac{y_{14}x_{2}}{x_{5}}\\
	-\frac{y_{11}x_{1}}{x_{5}} & -\frac{y_{12}x_{1}}{x_{5}} & -\frac{y_{13}x_{1}}{x_{5}} & -\frac{y_{14}x_{1}}{x_{5}}\\
\end{bmatrix}
=
\begin{bmatrix}
	-\frac{x_{4}}{x_{5}} & 1 & 0 & 0 \\
	-\frac{x_{3}}{x_{5}} & 0 & 1 & 0 \\
	-\frac{x_{2}}{x_{5}} & 0 & 0 & 1 \\
	-\frac{x_{1}}{x_{5}} & 0 & 0 & 0
\end{bmatrix}
\times 
\begin{bmatrix}
	y_{11} & y_{12} & y_{13} & y_{14}\\
	y_{21} & y_{22} & y_{23} & y_{24}\\
	y_{31} & y_{32} & y_{33} & y_{34}\\
	y_{41} & y_{42} & y_{43} & y_{44}
\end{bmatrix}.
\end{equation}
The matrix after one transformation is shown as in Eq. \ref{B1}.
\begin{equation}
\label{B1}
\begin{bmatrix}
	x_{5} & 0 & \cdots & 0 & 0 & 0 & 0 & 0 & 0 & 0 & 0\\
	x_{4} & x_{5} & 0 & \cdots & 0 & 0 & 0 & \\
	x_{3} & x_{4} & x_{5} & 0 & \cdots & 0 & 0 \\
	x_{2} & x_{3} & x_{4} & x_{5} & 0 & \cdots & 0 &  & B_{1}\\
	x_{1} & x_{2} & x_{3} & x_{4} & x_{5} & 0 & \ddots & \\
	0 & x_{1} & x_{2} & x_{3} & x_{4} & x_{5} & 0 & \ddots & 0 & 0 & 0\\
	\vdots & \ddots & \ddots & \ddots & \ddots & \ddots & \ddots & \ddots & \vdots & \vdots\\
	0 & \cdots & 0 & 0 & 0 & x_{1} & x_{2} & x_{3} & x_{4} & x_{5} & 0\\
	0 & 0 & \cdots & 0 & 0 & 0 & x_{1} & x_{2} & x_{3} & x_{4} & x_{5}
\end{bmatrix}_{n*n}.
\end{equation}
In this way, we can extract an $x_{5}$, and we have

\begin{equation}
B_{1} = TB = 
\begin{bmatrix}
	-\frac{x_{4}}{x_{5}} & 1 & 0 & 0 \\
	-\frac{x_{3}}{x_{5}} & 0 & 1 & 0 \\
	-\frac{x_{2}}{x_{5}} & 0 & 0 & 1 \\
	-\frac{x_{1}}{x_{5}} & 0 & 0 & 0
\end{bmatrix}\times
\begin{bmatrix}
	x_{1} & x_{2} & x_{3} & x_{4}\\
	0 & x_{1} & x_{2} & x_{3}\\
	0 & 0 & x_{1} & x_{2}\\
	0 & 0 & 0 & x_{1}
\end{bmatrix}.
\end{equation}

So, each time we perform the operation in Eq.\ref{xy}, we can reduce the size of the matrix and extract a $x_{5}$. Eventually, the matrix will be reduced to an $8 \times 8$ matrix as shown in Eq.\ref{24}.

\begin{equation}
\label{24}
\begin{bmatrix}
	x_{5} & 0 & 0 & 0 &  &  &  & \\
	x_{4} & x_{5} & 0 & 0 &  &  &  & \\
	x_{3} & x_{4} & x_{5} & 0 &  & B_{n-8} &  & \\
	x_{2} & x_{3} & x_{4} & x_{5} &  &  &  & \\
	x_{1} & x_{2} & x_{3} & x_{4} & x_{5} & 0 & 0 & 0\\
	0 & x_{1} & x_{2} & x_{3} & x_{4} & x_{5} & 0 & 0\\
	0 & 0 & x_{1} & x_{2} & x_{3} & x_{4} & x_{5} & 0\\
	0 & 0 & 0 & x_{1} & x_{2} & x_{3} & x_{4} & x_{5}
\end{bmatrix}.
\end{equation}
According to Eq.\ref{xty}, we can get the matrix $B'$.
\begin{equation}
\label{25}
B_{n-8} = T^{n-8}B = 
\begin{bmatrix}
	-\frac{x_{4}}{x_{5}} & 1 & 0 & 0 \\
	-\frac{x_{3}}{x_{5}} & 0 & 1 & 0 \\
	-\frac{x_{2}}{x_{5}} & 0 & 0 & 1 \\
	-\frac{x_{1}}{x_{5}} & 0 & 0 & 0
\end{bmatrix}^{n-8}\times
\begin{bmatrix}
	x_{1} & x_{2} & x_{3} & x_{4}\\
	0 & x_{1} & x_{2} & x_{3}\\
	0 & 0 & x_{1} & x_{2}\\
	0 & 0 & 0 & x_{1}
\end{bmatrix}.
\end{equation}
Now we get the answer shown in Eq.\ref{M} when $k=5$.
\begin{equation}
\label{df}
\begin{vmatrix}M\end{vmatrix} = (-1)^{(k-i)(n-k+i)} x_{k}^{n-k+1} \begin{vmatrix}D-CA^{-1}T^{n-2k+2}B\end{vmatrix},
\end{equation}
where the element in the upper left corner of the matrix is $x_{i}$ ($i\leq k$, in the example of Eq.\ref{M}, $i = 3$) and 
\begin{equation}
\label{abcdt}
A = D =
\begin{bmatrix}
	x_{k} & 0 & \cdots & 0\\
	x_{k-1} & x_{k} & \cdots & 0\\
	\vdots & \vdots & \ddots & \vdots\\
	x_{2} & x_{3} & \cdots & x_{k}
\end{bmatrix}
,B = C = 
\begin{bmatrix}
	x_{1} & x_{2} & \cdots & x_{k-1}\\
	0 & x_{1} & \cdots & x_{k-2}\\
	\vdots & \vdots & \ddots & \vdots\\
	0 & 0 & \cdots & x_{1}
\end{bmatrix},
T = 
\begin{bmatrix}
	-\frac{x_{k-1}}{x_{k}} & 1 & 0  & \cdots & 0\\
	-\frac{x_{k-2}}{x_{k}} & 0 & 1  & \cdots & 0\\
	\vdots & \vdots & \ddots & \ddots & \vdots\\
	-\frac{x_{2}}{x_{k}} & 0 & \cdots & 0 & 1\\
	-\frac{x_{1}}{x_{k}}  & 0 & \cdots & 0 & 0
\end{bmatrix}.
\end{equation}

\begin{proposition}
We can solve the determinant of a $k$-CM in $O(k^3 \log n)$.
\end{proposition}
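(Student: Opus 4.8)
The plan is to read the complexity directly off the reduction formula in Eq.~\ref{df}, since that identity already expresses $\begin{vmatrix}M\end{vmatrix}$ in terms of a handful of operations on matrices of fixed size $(k-1)\times(k-1)$, together with a single matrix power and a single scalar power. First I would confirm that the derivation leading to Eq.~\ref{df}, illustrated for $k=5$, goes through verbatim for general $k$: each elimination step of the form in Eq.~\ref{xy} clears the current top row using the first column, extracts one factor $x_{k}$, and --- crucially --- acts on the trailing $(k-1)\times(k-1)$ block exactly as left-multiplication by the fixed matrix $T$ of Eq.~\ref{abcdt}, as recorded in Eq.~\ref{xty}. Iterating this $n-2k+2$ times therefore produces the block $C A^{-1} T^{\,n-2k+2} B$ appearing in Eq.~\ref{df}, while the prefactor $x_{k}^{\,n-k+1}$ collects the extracted pivots and the sign absorbs the column swap of Eq.~\ref{M'}.

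Next I would tally the cost term by term. The dominant step is forming $T^{\,n-2k+2}$. Since the exponent is $\Theta(n)$ and $T$ is $(k-1)\times(k-1)$, I would use binary exponentiation (repeated squaring): this requires $O(\log n)$ multiplications of $(k-1)\times(k-1)$ matrices, each costing $O(k^{3})$ by the schoolbook algorithm, for a total of $O(k^{3}\log n)$. Every remaining ingredient is of fixed size and hence cheaper: inverting the lower-triangular matrix $A$ costs $O(k^{3})$ (indeed $O(k^{2})$ per right-hand side by forward substitution, so we may compute $C A^{-1} T^{\,n-2k+2} B$ without forming $A^{-1}$ explicitly), the constant number of products assembling $D - C A^{-1} T^{\,n-2k+2} B$ each cost $O(k^{3})$, and the determinant of this $(k-1)\times(k-1)$ matrix is $O(k^{3})$. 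The scalar power $x_{k}^{\,n-k+1}$ is computed by repeated squaring in $O(\log n)$ field operations. Summing, the running time is $O(k^{3}\log n) + O(k^{3}) + O(\log n) = O(k^{3}\log n)$.

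I expect the only genuine subtlety --- not a computational obstacle but a correctness one --- to be justifying that the single matrix power $T^{\,n-2k+2}$ really captures all $n-2k+2$ elimination rounds uniformly, i.e.\ that the transformation matrix $T$ does not change from step to step. This is exactly the content of Eq.~\ref{xty}, which shows that each round is the same left-multiplication by $T$ regardless of its position in the matrix; once this is in hand, associativity of matrix multiplication lets us collapse the rounds into $T^{\,n-2k+2}$, and the fast-exponentiation bound applies. A minor point worth stating explicitly is that repeated squaring is valid over any commutative ring, so the $O(k^{3}\log n)$ estimate holds over a finite field as well, with no appeal to the FFT.
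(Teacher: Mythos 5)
Your proposal is correct and follows essentially the same route as the paper's own proof: the cost is dominated by computing $T^{n-2k+2}$ via binary (repeated-squaring) exponentiation, giving $O(\log n)$ multiplications of $(k-1)\times(k-1)$ matrices at $O(k^{3})$ each, with all remaining fixed-size operations absorbed into lower-order terms. Your additional remarks --- that the uniformity of $T$ across elimination rounds (Eq.~\ref{xty}) is what licenses collapsing them into a single power, and that repeated squaring remains valid over a finite field --- merely make explicit points the paper leaves implicit.
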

\begin{proof}
Since matrices $A$, $B$, $C$, and $D$ are all of order $k-1$, the complexity of their computation is $O{(k^3)}$. For $T^{n-2k+2}$, we can use the fast matrix exponentiation method to calculate it.
\begin{equation}	
	T^n=
	\begin{cases}
		T^{\frac{n}{2}}*T^{\frac{n}{2}} & \text{ if } n \text{ mod } 2 =0 \\
		T^{\frac{n-1}{2}} * T^{\frac{n-1}{2}}* T & \text{ if } n \text{ mod } 2 =1 
	\end{cases}
\end{equation}

By doing so, we reduce the computation of the $n$th power of $T$ to logarithmic complexity, hence the complexity is $O(k^3 \log n)$. Furthermore, using the Jordan canonical form method can also solve the problem with logarithmic complexity, but when the problem is approached within a finite field, errors caused by inaccuracies may occur.
\end{proof}

The algorithm can solve the determinant of a $k$-CM in $O(k^3 \log n)$. Similar computation will also be utilized in the next section.

\section{Inverse of $k$-diagonal circulant matrices\label{s3}}

The algorithm mentioned in this paper is mainly based on the following formula:
\begin{equation}
\label{i1}
A^{-1} = \frac{A^*}{|A|}.
\end{equation}
$|A|$ has already been calculated in the previous section, so this section mainly computes $A^*$. We begin with the following simple theorem:

\begin{proposition}
The inverse of a circulant matrix is still a circulant matrix \cite{horn2012matrix}.
\end{proposition}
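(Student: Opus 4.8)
The plan is to prove this purely algebraically through the cyclic shift matrix, which keeps the argument valid over an arbitrary field and therefore consistent with the finite-field setting emphasized throughout the paper. Let $P$ denote the basic $n \times n$ circulant permutation matrix defined by $P_{ij} = 1$ when $j \equiv i+1 \pmod{n}$ and $0$ otherwise. The central observation I would establish first is the characterization: a matrix $M$ is circulant if and only if it commutes with $P$, that is, $PM = MP$.

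For this characterization I would compute the two products entrywise. A direct calculation gives $(PM)_{ij} = M_{i+1,\,j}$ and $(MP)_{ij} = M_{i,\,j-1}$, with all indices read modulo $n$, so the commutation relation $PM = MP$ is equivalent to $M_{i+1,\,j+1} = M_{i,j}$ for all $i,j$. This says precisely that each entry of $M$ depends only on $j-i \pmod{n}$, i.e. that $M$ is constant along its cyclic diagonals, which is the definition of a circulant matrix. The only care required here is the modular wrap-around at the boundary, which is exactly what encodes the circulant (rather than merely Toeplitz) structure.

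With the characterization in hand, the conclusion is immediate. Suppose $C$ is an invertible circulant matrix; then $CP = PC$. Conjugating this identity by $C^{-1}$ on both sides yields $C^{-1}(CP)C^{-1} = C^{-1}(PC)C^{-1}$, which simplifies to $PC^{-1} = C^{-1}P$. Hence $C^{-1}$ commutes with $P$ as well, and by the characterization $C^{-1}$ is itself circulant.

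I do not expect a serious obstacle, since the argument avoids eigenvalues and diagonalization entirely; the whole proof is a short manipulation of the commutation relation. If one preferred a more structural phrasing, the same fact follows by identifying circulant matrices with the commutative algebra $\mathbb{F}[x]/(x^n-1)$ via $C = p(P)$, in which the inverse of a unit is again a polynomial in $P$; I would keep the commutation argument as the main line because it is shortest and most transparent, and relegate the algebra viewpoint to a remark. The practical payoff worth stressing afterward is that, being circulant, $C^{-1}$ is determined by a single row, so it suffices to compute one row of the adjugate $A^{*}$ to recover the entire inverse.
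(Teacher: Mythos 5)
Your proof is correct, and it takes a different route from the paper: the paper offers no argument at all for this proposition, simply citing Horn and Johnson, where the standard treatment rests on simultaneous diagonalization of circulants by the Fourier matrix ($C = F^{*}\Lambda F$, so $C^{-1} = F^{*}\Lambda^{-1}F$) or on viewing circulants as polynomials in the shift matrix. Your commutation argument --- $M$ is circulant iff $PM = MP$, and commuting with $P$ is inherited by $C^{-1}$ via $C^{-1}(CP)C^{-1} = C^{-1}(PC)C^{-1}$ --- is both more elementary and strictly more general: the Fourier-diagonalization proof needs a primitive $n$th root of unity and so does not transfer to arbitrary finite fields, whereas your entrywise computation $(PM)_{ij} = M_{i+1,j}$, $(MP)_{ij} = M_{i,j-1}$ is valid over any field (indeed any ring with identity in which $C$ is invertible). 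This matters here, because the paper explicitly advertises finite-field applicability as a selling point over FFT-based methods, yet leans on a reference whose standard proof is complex-analytic in flavor; your argument closes that small gap. Your alternative remark via $\mathbb{F}[x]/(x^n - 1)$ is also sound, and one could add that Cayley--Hamilton gives $C^{-1}$ as a polynomial in $C$, hence in $P$, as a third field-independent route; but the commutation proof is the shortest and you are right to make it the main line. Your closing observation --- that circulancy of $C^{-1}$ reduces the inversion task to computing a single row --- is exactly the use the paper makes of this proposition in Section 3.
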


Since a circulant matrix can be represented as a vector, we only need to compute one entire row of the inverse $k$-CM matrix, which means $n$ distinct inverse elements.

Following we will introduce our algorithm using $k=5$ as an example, which can solve the inverse of $k$-CM with a complexity of $O(k^3 \log n+k^4)+kn$.

We first perform row and column swaps on the matrix to achieve the following form, requiring only one non-zero element in the upper right corner. Then compute the algebraic cofactors of these $k-1$ elements in boxes. By dividing them by $|M|$, we obtain the first $k-1$ elements of the first column of the inverse matrix.

\begin{equation}
\label{i2}
\setlength{\arraycolsep}{3.5pt}
M = 
\begin{bmatrix}
	\boxed{x_{2}} & \boxed{x_{3}} & \boxed{x_{4}} & \boxed{x_{5}} & 0 & \cdots & 0 & x_{1}\\
	x_{1} & x_{2} & x_{3} & x_{4} & x_{5} & \ddots & 0 & 0 \\
	0 & x_{1} & x_{2} & x_{3} & x_{4} & \ddots & 0 & 0 \\
	\vdots & \ddots & \ddots & \ddots & \ddots & \ddots & \vdots & \vdots \\
	0 &  0 & 0 & \cdots & x_{2} & x_{3} & x_{4} & x_{5}\\ 
	x_{5} & 0 & 0 & \cdots & x_{1} & x_{2} & x_{3} & x_{4} \\
	x_{4} & x_{5} & 0 & \cdots & 0 & x_{1} & x_{2} & x_{3} \\
	x_{3} & x_{4} & x_{5} & \cdots & 0 & 0 & x_{1} & x_{2} \\
\end{bmatrix}
\end{equation}

The algebraic cofactors we need can be obtained using the fast algorithm from Section 2. Their values are equal to the determinant of the following block matrix. We denote the value of the $i$th element we are seeking as $y_i$. $A_i$ and $C_i$ represent the matrices obtained by removing the $i$th column from matrices $A$ and $C$, respectively.

\begin{equation}
\label{i3}
y_i= 
\frac{(-1)^{(i+1)+k(n-2k+2)}x_{k}^{n-2k+2}
	\begin{vmatrix}
		T^{n-2k+2}A_i & B\\
		C_i & D
\end{vmatrix}}{|M|} \ \ (i=1,2,\cdots, k-1)
\end{equation}
\begin{equation}
\setlength{\arraycolsep}{3.5pt}
\nonumber
A=
\begin{bmatrix}
	x_{1} & x_{2} & \cdots & x_{k-1}\\
	0 & x_{1} & \cdots & x_{k-2}\\
	\vdots & \vdots & \ddots & \vdots\\
	0 & 0 & \cdots & x_{1}
\end{bmatrix}_{(k-1)^2},
B=
\begin{bmatrix}
	x_{k} & 0 & \cdots & 0\\
	x_{k-1} & x_{k} & \cdots & 0\\
	\vdots & \vdots & \ddots & \vdots\\
	x_{2} & x_{3} & \cdots & x_{k}
\end{bmatrix}_{(k-1)^2},
T=
\begin{bmatrix}
	-\frac{x_{k-1}}{x_{k}} & 1 & 0  & \cdots & 0\\
	-\frac{x_{k-2}}{x_{k}} & 0 & 1  & \cdots & 0\\
	\vdots & \vdots & \ddots & \ddots & \vdots\\
	-\frac{x_{2}}{x_{k}} & 0 & \cdots & 0 & 1\\
	-\frac{x_{1}}{x_{k}}  & 0 & \cdots & 0 & 0
\end{bmatrix}_{(k-1)^2},
\end{equation}
\begin{equation}
\setlength{\arraycolsep}{3.5pt}
\nonumber
C=
\begin{bmatrix}
	x_{k} & 0 & \cdots & 0 &0\\
	x_{k-1} & x_{k} & \cdots & 0 &0\\
	\vdots & \vdots & \ddots & \vdots &\vdots\\
	x_{3} & x_{4} & \cdots & x_{k} &0 \\
\end{bmatrix}_{(k-2)*{(k-1)}},
D=
\begin{bmatrix}
	x_{1} & x_{2} & \cdots & x_{k-2} & x_{k-1}\\
	0 & x_{1} & \cdots & x_{k-3} & x_{k-2}\\
	\vdots & \vdots & \ddots & \vdots &\vdots \\
	0 & 0 & \cdots & x_{1} & x_{2}
\end{bmatrix}_{(k-2)*{(k-1)}}.
\end{equation}


If we set the first column of the inverse matrix as $y$, we can list the following set of equations:
\begin{equation}
\label{i4}
\begin{cases}
	x_1y_n+x_2y_1+x_3y_2+x_4y_3+ \cdots +x_{k}y_{k-1} &= 1\\
	x_1y_1+x_2y_2+x_3y_3+x_4y_4+ \cdots +x_{k}y_{k} & = 0\\ 
	x_1y_2+x_2y_3+x_3y_4+x_4y_5+ \cdots +x_{k}y_{k+1} &= 0\\ 
	x_1y_3+x_2y_4+x_3y_5+x_4y_6+ \cdots +x_{k}y_{k+2} &= 0\\ 
	\ \ \ \ \ \ \ \ \ \ \ \ \ \ \ \ \ \ \ \ \ \ \ \ \ \ \ \ \ \ \  \vdots \\
	x_1y_{n-1}+x_2y_n+x_3y_1+x_4y_2+ \cdots +x_{k}y_{k-2} &= 0\\ 
\end{cases},
\end{equation}
where $x_1$ to $x_{k}$ are known quantities and $y_1$ to $y_{k-1}$ have already been solved in previous steps. Therefore, we can directly derive the other elements in vector $y$ through the following recurrence formula.

\begin{equation}
y_{i} = -\frac{ x_1y_{i-k+1}+x_2y_{i-k+2}+x_3y_{i-k+3}+ \cdots x_{k-1}y_{i-1}}{x_k} \ \ (i=k,k+1,\cdots,n).
\end{equation}
We summarize the above calculation method into Algorithm \ref{a1}.
\begin{algorithm}[h]
\SetAlgoLined
\label{a1}
Input a matrix $M$ in the form of Eq. \ref{i2} \;
calculate $|M|$\;
calculate $y_1,y_2,\cdots,y_{k-1}$ according to Eq. \ref{i3}\;
calculate $y_k,y_{k+1},\cdots,y_{n}$ according to Eq. \ref{i4}\;
construct and return $M^{-1}$
\caption{Inverse of $k$-diagonal circulant matrix $M$.}
\end{algorithm}

\begin{proposition}
We can solve the inverse of a $k$-CM in $O(k^3 \log n+k^4)+kn$.
\end{proposition}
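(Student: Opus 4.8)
The plan is to bound the arithmetic cost of each line of Algorithm \ref{a1} separately and then add the four contributions. First I would dispose of line 2 by invoking the determinant proposition (Proposition 1): computing $|M|$ costs $O(k^3 \log n)$, since by Eq. \ref{df} it reduces to evaluating $\bigl|D - C A^{-1} T^{n-2k+2} B\bigr|$ after a single matrix exponentiation. The key observation I would stress here is that the power $T^{n-2k+2}$ — the sole source of the $\log n$ factor — is computed \emph{once} by fast binary exponentiation on a matrix of order $k-1$, at cost $O(k^3 \log n)$, and is then available for reuse in the subsequent cofactor computations.

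Next I would analyze line 3, the evaluation of $y_1,\dots,y_{k-1}$ through Eq. \ref{i3}. The crucial point is a dimension count: $T^{n-2k+2}$ and $B$ are $(k-1)\times(k-1)$, the reduced block $A_i$ is $(k-1)\times(k-2)$, while $C_i$ is $(k-2)\times(k-2)$ and $D$ is $(k-2)\times(k-1)$, so the block matrix assembled in Eq. \ref{i3} is square of order $2k-3$. Each such determinant is therefore computable in $O(k^3)$ by Gaussian elimination, and forming the product $T^{n-2k+2}A_i$ costs $O(k^3)$ as well. Because the already-computed power $T^{n-2k+2}$ is shared across all indices $i$, the total cost of the $k-1$ evaluations is $O(k^4)$, with no further logarithmic factor incurred.

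For line 4 I would read off the cost directly from the recurrence obtained from the linear system of Eq. \ref{i4}. Each $y_i$ with $i=k,\dots,n$ is a single division applied to a weighted sum of $k-1$ previously computed values, hence $O(k)$ operations; running this over the remaining $n-k+1$ indices costs $O(kn)$. Finally, line 5 only needs to emit the computed vector, since the inverse is itself circulant and fully determined by its first column (Proposition 2); this is $O(n)$ and is absorbed into the $kn$ term. Summing the contributions gives $O(k^3 \log n) + O(k^3 \log n + k^4) + O(kn) + O(n) = O(k^3 \log n + k^4) + kn$, as claimed.

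The step I expect to require the most care is the bookkeeping in line 3: one must verify that deleting a column from each of $A$ and $C$ keeps the composite block square of order exactly $2k-3$, and — more importantly for the complexity claim — that the expensive $O(k^3 \log n)$ exponentiation is performed once rather than once per index $i$. If $T^{n-2k+2}$ were recomputed inside the loop, the bound would degrade to $O(k^4 \log n)$; the whole point is to amortize that single power across all $k-1$ cofactor evaluations.
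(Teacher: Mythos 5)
Your proof is correct and follows essentially the same route as the paper's: compute $T^{n-2k+2}$ once by fast exponentiation at $O(k^3 \log n)$, reuse it across the $k-1$ cofactor evaluations of Eq.~\ref{i3} for a total of $O(k^4)$, and obtain the remaining entries from the order-$k$ recurrence of Eq.~\ref{i4} at cost $kn$. Your explicit dimension count showing the block determinant is square of order $2k-3$, and your warning about not recomputing the power inside the loop, are details the paper's terser proof leaves implicit, but the argument is the same.
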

\begin{proof}
We have already proven that the complexity of step 2 is $O(k^3 \log n)$. In step 3, $T^{n-2k+2}$ only needs to be calculated once, with a complexity of $O(k^3 \log n)$. The complexity of other matrix operations is $O(k^4)$. In step 4, the number of calculations for each unknown is $k$, so the cost of step 4 is $kn$.
\end{proof}

We believe that the complexity of this algorithm may have already been reduced to the optimization. In general, $k$ is much smaller than $n$, so the coefficient of $n$ has the greatest impact on the algorithm's speed. Through the method of recurrence relations, we have reduced the coefficient of $n$ to $k$, and even the constant term is only 1. We think that it is almost impossible to continue optimizing further. Furthermore, this algorithm is extremely simple, requiring only five steps, and each step can be directly substituted into a formula, which is also a very commendable aspect.

\section{Inverse of $k$-diagonal cyclic banded matrices\label{s2}}

The calculation of the inverse of $k$-CBM is similar to that of $k$-CM. It still involves first solving for the determinant, then using the algebraic cofactors to solve for some elements in the inverse matrix, and finally solving the entire inverse matrix by solving a system of linear equations. Because the calculation process is mostly similar, we will omit many of the computational steps in this section. However, we will still present the final calculation formulas.

The determinant of a $k$-CBM satisfies the following equation:
\begin{equation}
\begin{vmatrix}M_{n * n}\end{vmatrix} = (-1)^{(k-i)(n-k+i)} 
\begin{vmatrix}D-CA^{-1}(\prod_{j=1}^{n-2k+2}T_j)B\end{vmatrix}  \prod_{j=1}^{n-k+1}x^{(j)}_k,
\end{equation}
where  $n \ge 2k-2$ and the element in the upper left corner of the matrix is $x_{i}$ $(i \leq k$, in example of Eq.\ref{M}, $i = 3$).
\begin{equation}
\renewcommand\arraystretch{1.5}
A = 
\begin{bmatrix}
	x^{(n-2k+3)}_{k} & 0 & \cdots & 0\\
	x^{(n-2k+4)}_{k-1} & x^{(n-2k+4)}_{k}  & \cdots & 0\\
	\vdots & \vdots & \ddots & \vdots\\
	x^{(n-k+1)}_{2} & x^{(n-k+1)}_{3} & \cdots & x^{(n-k+1)}_{k}
\end{bmatrix},
B = 
\begin{bmatrix}
	x^{(1)}_1 & x^{(1)}_2 & \cdots & x^{(1)}_{k-1}\\
	0 & x^{(2)}_1 & \cdots & x^{(2)}_{k-2}\\
	\vdots & \vdots  & \ddots & \vdots\\
	0 & 0 & \cdots & x^{(k-1)}_{1}
\end{bmatrix} ,
T_j =
\begin{bmatrix}
	-\frac{x^{(j+1)}_{k-1}}{x^{(j)}_k} & 1 & 0  & \cdots & 0\\
	-\frac{x^{(j+2)}_{k-2}}{x^{(j)}_k} & 0 & 1  & \cdots & 0\\
	\vdots & \vdots & \ddots & \ddots & \vdots\\
	-\frac{x^{(j+k-2)}_{2}}{x^{(j)}_k} & 0 & \cdots & 0 & 1\\
	-\frac{x^{(j+k-1)}_{1}}{x^{(j)}_k}  & 0 & \cdots & 0 & 0
\end{bmatrix},
\end{equation}
\begin{equation}
\label{e11}
\renewcommand\arraystretch{1.5}
C = 
\begin{bmatrix}
	x^{(n-k+2)}_{1} & x^{(n-k+2)}_{2} & \cdots & x^{(n-k+2)}_{k-1}\\
	0 & x^{(n-k+3)}_{1} & \cdots & x^{(n-k+3)}_{k-2}\\
	\vdots & \vdots & \ddots & \vdots \\
	0 & 0 & \cdots & x^{(n)}_{1}
\end{bmatrix},
D = 
\begin{bmatrix}
	x^{(n-k+2)}_{k} & 0 & \cdots & 0\\
	x^{(n-k+3)}_{k-1} & x^{(n-k+3)}_{k} & \cdots & 0\\
	\vdots & \vdots & \ddots & \vdots\\
	x^{(n)}_{2} & x^{(n)}_{3} & \cdots & x^{(n)}_{k}
\end{bmatrix}.
\end{equation}

We can quickly conclude that the complexity of calculating the determinant of a $k$-CBM is $O(k^3n)$.
Next, unlike with $k$-CM, we need to calculate the algebraic cofactors of the $(k-1)^2$ elements marked in boxes. For ease of demonstration, we will still use $k=5$ as an example.

\begin{equation}
\label{CBM}
M = 
\begin{bmatrix}
	\boxed{x_{2}^{(1)}} & \boxed{x_{3}^{(1)}} & \boxed{x_{4}^{(1)}} & \boxed{x_{5}^{(1)}} & 0 & 0 & 0 & \cdots & 0 & x_{1}^{(1)}\\
	x_{1}^{(2)} & \boxed{x_{2}^{(2)}} & \boxed{x_{3}^{(2)}} & \boxed{x_{4}^{(2)}} & \boxed{x_{5}^{(2)}} & 0 & 0 & \ddots & 0 & 0 \\
	0 & x_{1}^{(3)} & \boxed{x_{2}^{(3)}} & \boxed{x_{3}^{(3)}} & \boxed{x_{4}^{(3)}} & \boxed{x_{5}^{(3)}} & 0 & \ddots & 0 & 0 \\
	0 & 0 & x_{1}^{(4)} & \boxed{x_{2}^{(4)}} & \boxed{x_{3}^{(4)}} & \boxed{x_{4}^{(4)}} & \boxed{x_{5}^{(4)}} & \ddots & 0 & 0  \\
	\vdots & \ddots & \ddots & \ddots & \ddots & \ddots & \ddots & \ddots & \vdots & \vdots \\
	0 &  0 & 0 & \cdots & 0 & x_{1}^{(n-3)} & x_{2}^{(n-3)} & x_{3}^{(n-3)} & x_{4}^{(n-3)} & x_{5}^{(n-3)}\\ 
	x_{5}^{(n-2)} & 0 & 0 & \cdots & 0 & 0 & x_{1}^{(n-2)} & x_{2}^{(n-2)} & x_{3}^{(n-2)} & x_{4}^{(n-2)} \\
	x_{4}^{(n-1)} & x_{5}^{(n-1)} & 0 & \cdots & 0 & 0 & 0 & x_{1}^{(n-1)} & x_{2}^{(n-1)} & x_{3}^{(n-1)} \\
	x_{3}^{(n)} & x_{4}^{(n)} & x_{5}^{(n)} & \cdots & 0 & 0 & 0 & 0 & x_{1}^{(n)} & x_{2}^{(n)} \\
\end{bmatrix}
\end{equation}

For the algebraic cofactors of the elements in the first row, we have already calculated them previously. For the elements in the second row, we can calculate them in the same manner by moving the first row of the matrix to the bottom and the first column to the rightmost, and so on. By dividing the results by $|M|$, we obtain the $(k-1)^2$ elements in the inverse matrix. We use $M^{-1}_{i,j}$ to represent the element in the $i$th row and $j$th column of the matrix $M^{-1}$. $A_{i,j}$ and $C_{i,j}$ represent the matrices obtained by removing the $i$th column from matrices $A_{j}$ and $C_{j}$, respectively. Moreover, when the indices (superscripts and subscripts $i$ and $j$) exceed $n$, they should automatically modulo $n$ and similarly, if they are less than $n$, then $n$ should automatically be added.
\begin{equation}
\label{CBM-1}
M^{-1}_{i+j-1,j}= 
\frac{(-1)^{(i+1)+k(n-2k+2)}\prod_{p=j+1}^{n-2k+2+j}x^{(p)}_k
	\begin{vmatrix}
		\prod_{p=j+1}^{n-2k+2+j}T_pA_{i,j} & B_{j}\\
		C_{i,j} & D_{j}
\end{vmatrix}}{|M|} \ \ (i, j = 1,2,\cdots ,k-1)
\end{equation}
\begin{equation}
\setlength{\arraycolsep}{2pt}
\nonumber
A_{j}=
\begin{bmatrix}
	x_{1}^{(j+1)} & x_{2}^{(j+1)} & \cdots & x_{k-1}^{(j+1)}\\
	0 & x_{1}^{(j+2)} & \cdots & x_{k-2}^{(j+2)}\\
	\vdots & \vdots & \ddots & \vdots\\
	0 & 0 & \cdots & x_{1}^{(j+k-1)}
\end{bmatrix}_{(k-1)^2},
B_{j}=
\begin{bmatrix}
	x_{k}^{(n+j-2k+3)} & 0 & \cdots & 0\\
	x_{k-1}^{(n+j-2k+4)} & x_{k}^{(n+j-2k+4)} & \cdots & 0\\
	\vdots & \vdots & \ddots & \vdots\\
	x_{2}^{(n+j-k+1)} & x_{3}^{(n+j-k+1)} & \cdots & x_{k}^{(n+j-k+1)}
\end{bmatrix}_{(k-1)^2},
T_{j} =
\begin{bmatrix}
	-\frac{x^{(j+1)}_{k-1}}{x^{(j)}_k} & 1 & 0  & \cdots & 0\\
	-\frac{x^{(j+2)}_{k-2}}{x^{(j)}_k} & 0 & 1  & \cdots & 0\\
	\vdots & \vdots & \ddots & \ddots & \vdots\\
	-\frac{x^{(j+k-2)}_{2}}{x^{(j)}_k} & 0 & \cdots & 0 & 1\\
	-\frac{x^{(j+k-1)}_{1}}{x^{(j)}_k}  & 0 & \cdots & 0 & 0
\end{bmatrix}_{(k-1)^2}
\end{equation}
\begin{equation}
\setlength{\arraycolsep}{2pt}
\nonumber
C_{j}=
\begin{bmatrix}
	x_{k}^{(n+j-k+2)} & 0 & \cdots & 0 &0\\
	x_{k-1}^{(n+j-k+3)} & x_{k}^{(n+j-k+3)} & \cdots & 0 &0\\
	\vdots & \vdots & \ddots & \vdots &\vdots\\
	x_{3}^{(n+j-1)} & x_{4}^{(n+j-1)} & \cdots & x_{k}^{(n+j-1)} &0 \\
\end{bmatrix}_{(k-2)*{(k-1)}},
D_{j}=
\begin{bmatrix}
	x_{1}^{(n+j-k+2)} & x_{2}^{(n+j-k+2)} & \cdots & x_{k-2}^{(n+j-k+2)} & x_{k-1}^{(n+j-k+2)}\\
	0 & x_{1}^{(n+j-k+3)} & \cdots & x_{k-3}^{(n+j-k+3)} & x_{k-2}^{(n+j-k+3)}\\
	\vdots & \vdots & \ddots & \vdots &\vdots \\
	0 & 0 & \cdots & x_{1}^{(n+j-1)} & x_{2}^{(n+j-1)}
\end{bmatrix}_{(k-2)*{(k-1)}}.
\end{equation}

Now that we have $k-1$ known elements in one column, we can set up a system of equations to solve for all the elements in these $k-1$ columns. For each $i=1,2,\cdots ,n$ and $j=1,2,\cdots ,k-1$, we have

\begin{equation}
\label{CBM-2}
M^{-1}_{i,j}=-\frac{ x_{1}^{(i-k+2)}M^{-1}_{i-k+1,j}+x_{2}^{(i-k+2)}M^{-1}_{i-k+2,j}+x_{3}^{(i-k+2)}M^{-1}_{i-k+3,j}+ \cdots + x_{k-1}^{(i-k+2)}M^{-1}_{i-1,j}}{x_{k}^{(i-k+2)}} .
\end{equation}

If $AB=E$, then $BA=E$ as well. We can use these known $k-1$ columns to continue setting up equations, and thereby quickly solve for the entire matrix. For each $i=1,2,\cdots , n$ and $j = k, k+1, \cdots ,n$, we have

\begin{equation}
\label{CBM-3}
M^{-1}_{i,j}=
\begin{cases}
	-\frac{x_{k}^{(j-k+1)}M^{-1}_{i,j-k+1}+x_{k-1}^{(j-k+2)}M^{-1}_{i,j-k+2}+x_{k-2}^{(j-k+3)}M^{-1}_{i,j-k+3}+ \cdots +x_{2}^{(j-1)}M^{-1}_{i,j-1}}{x_{1}^{(j)}} & \text{if}\ i \neq j-1 \\
	\frac{1-x_{k}^{(j-k+1)}M^{-1}_{i,j-k+1}-x_{k-1}^{(j-k+2)}M^{-1}_{i,j-k+2}-x_{k-2}^{(j-k+3)}M^{-1}_{i,j-k+3}- \cdots -x_{2}^{(j-1)}M^{-1}_{i,j-1}}{x_{1}^{(j)}} & \text{if}\ i = j-1
\end{cases}.
\end{equation}

We summarize the above calculation method into Algorithm \ref{a2}.
\begin{algorithm}[h]
\SetAlgoLined
\label{a2}
Input a matrix $M$ in the form of Eq. \ref{CBM} \;
calculate $|M|$\;

\For{$i = 1; i < k; i++$}{
	\For{$j = 1; j < k; j++$}{
		calculate $M^{-1}_{i+j-1,j}$ according to Eq. \ref{CBM-1}\;
	}
}
\For{$j = 1; j < k; j++$}{
	\For{$i = k-1+j; i < n+j; i++$}{
		calculate $M^{-1}_{((i-1)\ mod\ n)+1,j}$ according to Eq. \ref{CBM-2}\;
	}	
}
\For{$i=1; i \leq n; i++$}{
	\For{$j = k;j \leq n; j++$}{
		calculate $M^{-1}_{i,j}$ according to Eq. \ref{CBM-3}\;
	}
	
}
construct and return $M^{-1}$
\caption{Inverse of $k$-diagonal cyclic banded matrix $M$.}
\end{algorithm}

\begin{proposition}
We can solve the determinant of a $k$-CBM in $O(k^3 n+k^5)+kn^2$.
\end{proposition}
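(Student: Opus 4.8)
The plan is to bound the total running time of Algorithm \ref{a2} line by line, exactly as in the proof for $k$-CM, and then sum the contributions. Step 2 costs $O(k^3 n)$: in the determinant formula the only superlinear work is the chain product $\prod_{j=1}^{n-2k+2} T_j$ of $(k-1)\times(k-1)$ matrices, which is $n-2k+2$ multiplications at $O(k^3)$ each, while forming $A^{-1}$, assembling $D-CA^{-1}(\prod_j T_j)B$, taking its $(k-1)$-order determinant, and accumulating the scalar product $\prod_j x_k^{(j)}$ are all lower order. Steps 4 and 5 are immediate from the recurrences: Eq. \ref{CBM-2} fills each of the $k-1$ distinguished columns with $O(k)$ work per entry, giving $O(k^2 n)$, while Eq. \ref{CBM-3} sweeps all $n$ rows across the remaining $\approx n-k$ columns at $O(k)$ per entry, giving the $kn^2$ term. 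It remains to account for Step 3.

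In Step 3 we must evaluate the $(k-1)^2$ cofactors of Eq. \ref{CBM-1}, and two things have to be controlled. First, each cofactor is the determinant of a block matrix of order $2k-3$, so a single evaluation is $O(k^3)$ and the $(k-1)^2$ of them contribute $O(k^5)$; moreover, for a fixed column index $j$ the factor $A_{i,j}$ is just $A_j$ with one column deleted, so the product $(\prod_p T_p)A_j$ need only be formed once per $j$ and then restricted, keeping this part within budget. Second, and this is the crux, Step 3 requires the $k-1$ sliding-window products $W_j=\prod_{p=j+1}^{\,n-2k+2+j} T_p$ for $j=1,\dots,k-1$.

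The hard part is computing these $k-1$ window products within $O(k^3 n)$ rather than $O(k^4 n)$. Unlike the circulant case, the factors $T_j$ here depend on $j$, so fast exponentiation is unavailable and even a single window already costs $\Theta(k^3 n)$; recomputing one length-$\Theta(n)$ product per column would blow the budget up to $O(k^4 n)$. The fix I would use is that every window $[\,j+1,\ n-2k+2+j\,]$ contains the common core $[\,k,\ n-2k+3\,]$. I would compute the core product once in $O(k^3 n)$ and obtain each $W_j$ by prepending a left flap $T_{j+1}\cdots T_{k-1}$ and appending a right flap $T_{n-2k+4}\cdots T_{n-2k+2+j}$; the flaps have length $O(k)$ and are nested, so all of them can be built incrementally with $O(k)$ multiplications in total, i.e. $O(k^4)$, and assembling the $k-1$ products costs another $O(k^4)$. (Equivalently one could take prefix products $T_1\cdots T_i$ and recover $W_j$ by an inverse, but that presupposes each $T_j$ is invertible, whereas the core-and-flaps construction needs no such hypothesis.) The scalar weights $\prod_{p=j+1}^{\,n-2k+2+j} x_k^{(p)}$ are handled by ordinary prefix products in $O(n)$ since $x_k^{(p)}\neq 0$. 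Collecting the four steps gives $O(k^3 n)+O(k^3 n+k^5)+O(k^2 n)+O(kn^2)=O(k^3 n+k^5)+kn^2$, as claimed.
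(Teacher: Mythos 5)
Your proposal is correct and follows essentially the same route as the paper: its proof likewise computes the common core product $\prod_{p=k}^{n-2k+3}T_p$ (exactly the window $[k,\,n-2k+3]$ you identify) and the corresponding scalar product once at cost $O(k^3 n)$, absorbs the per-$j$ flap multiplications and cofactor evaluations into the $O(k^5)$ term, and charges $kn^2$ for the recurrences of Eqs.~\ref{CBM-2} and \ref{CBM-3}. You merely spell out the core-and-flaps bookkeeping (and the $O(k^4)$ incremental construction of the nested flaps) that the paper leaves implicit in the phrase ``other matrix operations.''
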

\begin{proof}
In steps 3 to 7, $\prod_{p=k}^{n-2k+3}T_p$ and $\prod_{p=k}^{n-2k+3}x_{k}^{(p)}$ only needs to be calculated once, with a complexity of $O(k^3 n)$. The complexity of other matrix operations is $O(k^5)$. In steps 8 to 17, the number of calculations for each unknown is $k$, so its cost is $kn^2$.
\end{proof}

In cases where $k$ is much smaller than $n$, the inverse of a $k$-CBM can almost be found within a complexity of $kn^2$ through an iterative solution of linear equations. Notably, the coefficient for $kn^2$ is 1. Moreover, this algorithm is also very simple and easy to understand. This makes the algorithm exceptionally excellent from any perspective.

\section{Conclusion \label{s3}}
In this paper, we first introduced a method for calculating the determinant of $k$-CMs with a complexity of $O(k^3\log n)$. Based on this, we presented fast computation methods for the inverses of $k$-CMs and $k$-CBMs, with complexities of $O(k^3 \log n+k^4)+kn$ and $O(k^3 n+k^5)+kn^2$, respectively. These two calculation methods have the following advantages. First, their inverses can be rapidly represented. Second, most steps are directly substituted into formulas. Finally, the coefficients of $kn$ and $kn^2$ have been reduced to the minimum. Finally, our computations are also applicable to finite fields.

\section*{Acknowledgments}
This study is financed by Tianjin Science and Technology Bureau, finance code: 21JCYBJC00210.




\bibliographystyle{elsarticle-harv}
\bibliography{fbref}







\end{document}